\theoremstyle{theorem}
\newtheorem{theorem}{Theorem}
\newtheorem{lemma}{Lemma}
\newtheorem{corollary}{Corollary}
\newtheorem{proposition}{Proposition}
\theoremstyle{definition}
\newtheorem{protocol}{Protocol}
\newtheorem{definition}{Definition}
\newtheorem{example}{Example}
\newtheorem{remark}{Remark}
\renewcommand{\le}{\leqslant}
\renewcommand{\sp}{{\rm sp}}
\newcommand{\herm}{{\rm H}}
\newcommand{\Q}{{\rm Q}}
\newcommand{\W}{{\rm W}}
\title{Improved User-Private Information Retrieval via Finite Geometry}
\author{Oliver W.~Gnilke\footnote{Aalto University, oliver.gnilke@aalto.fi} \and Marcus Greferath\footnote{Aalto University, marcus.greferath@aalto.fi} \and  Camilla Hollanti\footnote{Aalto University, camilla.hollanti@aalto.fi} \and Guillermo Nu{\~n}ez Ponasso\footnote{Aalto University, guillermo.nunezponasso@gmail.com} \and Padraig \'{O} Cath\'{a}in\footnote{Worcester Polytechnic Institute, pocathain@wpi.edu} \and Eric Swartz\footnote{College of William and Mary, easwartz@wm.edu}}
\begin{document}
\maketitle

\begin{abstract}
In a User-Private Information Retrieval (UPIR) scheme, a set of users collaborate to retrieve files from a database without revealing to observers which participant in the scheme requested the file. Protocols have been proposed based on pairwise balanced designs and symmetric designs. We
propose a new class of UPIR schemes based on generalised quadrangles (GQ).
\smallbreak

We prove that while the privacy of users in the previously proposed schemes could be compromised by a single user, the new GQ-UPIR schemes proposed in this paper maintain privacy with high probability even when up to $O(n^{1/4 - \epsilon})$ users collude, where $n$ is the total number of users in the scheme.
\end{abstract}

\section{Introduction}

\textit{Private Information Retrieval} (PIR) allows a user to retrieve information from a database without revealing which information was requested. A trivial solution is for the user to download all of the information on the database, but when the information is replicated in multiple locations, more efficient schemes are known \cite{CKGS98,BIKR02,DG16}.

A slightly different approach to the problem of private information retrieval attempts to hide the identity of the user downloading a file. One approach to this problem is Onion Routing \cite{GRS97}. An \textit{onion} is a recursively encrypted data packet, which encodes a path through a network of cooperating users. Each user in the path
removes the outermost layer of encryption, and forwards the onion to the next user on the path. The onion carries no identifying features, which would allow an observer to identify it with different outer layers. Anonymity is achieved in this system by choosing sufficiently long paths at random for the onions.

One disadvantage of onion routing is that the number of times a message is passed between users can be large. This results in a low throughput of data when bandwidth in the network is limited. \textit{User-Private Information Retrieval} (UPIR) is an approach
to private information retrieval in which the identities of users are hidden, but the number of times a message is forwarded through the network is tightly controlled.
To achieve privacy in a UPIR scheme, it is usual to place strong restrictions on which users can communicate with one another within the scheme.

\begin{definition}[cf. Section 2, \cite{SwansonStinsonI}] \label{def:UPIR}
	A \textit{UPIR system} consists of a bipartite graph $(\mathcal{U}\cup \mathcal{M},E)$, where $\mathcal{U}$ is the set of users and $\mathcal{M}$ is the set of message spaces. A user $u \in \mathcal{U}$ has access to a message space $M \in \mathcal{M}$ if $(u,M)$ is in the set of edges $E$. Furthermore it is assumed that all users have access to a database that evaluates queries.
\end{definition}

\begin{example}
	\Cref{fig:UPIR} shows a UPIR system with $5$ users and $3$ message spaces and the incidence matrix of the corresponding bipartite graph.
	\begin{figure}
		\centering
			\includegraphics[height=2cm]{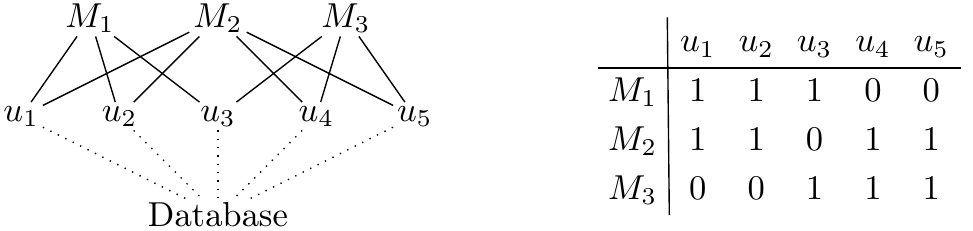}
			\caption{A visualisation of a UPIR system.}
			\label{fig:UPIR}
	\end{figure}
\end{example}

A common requirement in earlier work is that every pair of users share access to a common message space. Here, we require only that the bipartite graph underlying the UPIR scheme is connected; we say that such a UPIR system is \textit{connected}. If $u$ is incident with $M$, then user $u$ can both \textit{write} messages to $M$ and \textit{read} any messages written to $M$; if $u$ is not incident with $M$ then $u$ has no access to $M$. Users communicate within the scheme only by writing messages to one another in the message spaces. Any user may send queries to the database for evaluation. Users preserve their privacy by passing requests via the message spaces to other users to act as proxies for them.

In order for users to communicate using a UPIR system, a \textit{protocol} is required; one example is described explicitly below.
We refer to the combination of a UPIR system and a protocol as a \textit{UPIR scheme}. This distinction is helpful in illustrating the interactions between the combinatorics of the bipartite graph and the privacy  properties of the protocol.

\begin{protocol}\label{prot1}
	Let $(\mathcal{U} \cup \mathcal{M}, E)$ be a connected UPIR system.
	Suppose that user $u$ wishes to retrieve the response to the query $Q$ from the database.
	\begin{small}
	\begin{enumerate}
		\item User $u$ chooses a user $v$ uniformly at random from the set of all users.
		\item If $u = v$, $u$ requests $Q$ directly from the server, receiving response $R$.
		\item Otherwise, $u$ chooses a shortest path $(u, M_{1}, u_{1}, \dots, M_{n}, v)$ from $u$ to $v$ in the bipartite graph.
		\item User $u$ writes a \textit{request} $[ (u_{1}, M_{2}, \dots, M_{n}, v), Q]$ onto $M_{1}$.
		\item For $i = 1, 2, \dots, n-1$, user $u_{i}$ observes the request $[(u_{i}, M_{i+1}, \dots, M_{n}, v), Q]$ addressed to him in $M_{i}$. He writes a new request $[(u_{i+1}, M_{i+2}, \dots, M_{n}, v), Q]$ to $M_{i+1}$.
		\item When the message $[(v), Q]$ reaches message space $M_{n}$, user $v$ sees it and forwards $Q$ to the database. User $v$ writes the \textit{response} $R$ from the database as $[(v), Q, R]$ in $M_{n}$.
		\item User $u_j$, upon seeing the response $[(u_{j+1}, M_{j+2}, \ldots, v), Q, R]$ in $M_{j+1}$, writes the response    $[(u_{j},M_{j},u_{j+1},\ldots, v), Q, R]$ to $M_j$.
		\item User $u$ receives the response $R$ to his query after $u_1$ writes $[(u_{1}, M_{2}, \ldots, v), Q, R]$ to $M_{1}$.
	\end{enumerate}
\end{small}
\end{protocol}

\begin{remark}
Many variations of Protocol \ref{prot1} are possible, including randomising the path, and alterations to save used memory. Such changes do not alter the results in the following sections. Any user with access to a message space $M_{i}$ on the path can observe the request, the identity of the proxy and the response; but gains limited information about the source of the request. 
\end{remark}

Domingo-Ferrer, Stokes, Bras-Amor\'{o}s, and co-authors introduced UPIR systems and analysed
a protocol where users write queries to message spaces without specifying a proxy \cite{Domingo-Ferrer,StokesBrasAmores},
while a special case of the above protocol was developed by Swanson and Stinson
\cite{SwansonStinsonI, SwansonStinsonII}.
Both groups of authors worked on UPIR systems derived from highly structured set systems.
In particular, they required that every pair of users share a common message space,
in which case Protocol \ref{prot1} can be implemented so that every path has length at most $2$: any user can write requests directly to his chosen proxy.

Stokes and Bras-Amor\'{o}s considered the problem of constructing a UPIR system under the restrictions
that $\deg(M)$ is constant for all message spaces $M$, \cite{StokesBrasAmores}.
This requirement can be interpreted as balancing the load amongst message spaces. They also require that every pair of
users share precisely one message space. After rejecting some degenerate solutions in which message spaces
have size $1, 2, n-1$ or $n$; the authors are left with precisely the class of finite projective planes.
We recall that a projective plane is a combinatorial structure consisting of \textit{points} and \textit{blocks}
in which
\begin{enumerate}
    \item Every pair of points is contained in a unique block
    \item Every pair of blocks intersect in a unique point
    \item There exist $4$ points, no three contained in any block
\end{enumerate}
Projective planes have a rich theory playing an important role in combinatorics, geometry and algebra.
Inspired by Stokes and Bras-Amor\'{o}s, Swanson and Stinson analysed attacks on projective plane UPIR systems,
and proposed UPIR systems constructed from a broader class of block designs. In particular, they considered balanced incomplete block designs and pairwise
balanced designs (PBD). The monograph by Beth, Jungnickel and Lenz is a standard reference for design theory \cite{BJL}.

In the next section, we will show that observers in a UPIR system have an advantage in gathering information about users with whom
they share a message space. Motivated by this result, we consider the next obvious class of UPIR systems; those in which users are
separated by a path of length at most $2$. A natural class of examples are furnished by finite generalised quadrangles. Furthermore we consider a different protocol, based on onion routing, and prove that it aids privacy.

\section{Privacy in a UPIR system}

It is assumed throughout that the content of any message space is only available to the users who have access to the given message space as in Protocol \ref{def:UPIR}. An external eavesdropper, \emph{i.e.}, someone who is not a user in the system, can observe the requests made to the database, since these are not encrypted, but cannot read messages sent between users. Security in this setting has been studied previously in \cite{SwansonStinsonI} and their result forms the basis of any UPIR scheme.

\begin{definition}
A UPIR system is \textit{private with respect to external observers} if, for any
request $Q$ forwarded to the database by user $v$, we have that
\[ \mathbb{P}(u \textrm{ is the source of } Q \mid v \textrm{ is the proxy}) = \mathbb{P}(u \textrm{ is the source of } Q )\,. \]
\end{definition}

Swanson and Stinson have proved that the obvious strategy in which users
select proxies uniformly at random is sufficient for privacy against external
observers.

\begin{theorem}[Theorems 6.1, 6.2 \cite{SwansonStinsonI}]\label{thm:anon}
A connected UPIR scheme is private against external observers if each user chooses
proxies uniformly at random, and the proxies for distinct queries are chosen
independently.
\end{theorem}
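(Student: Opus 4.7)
The plan is to unpack the privacy condition with Bayes' rule and show that, from the viewpoint of an external observer, the posterior distribution on the source coincides with its prior. The only quantities an external eavesdropper can observe are the query $Q$ itself and the identity of the user $v$ who submits $Q$ to the database; message-space traffic is assumed hidden. Hence ``$v$ is the proxy'' is the entire content of the observation, and we need to show
\[ \mathbb{P}(u \text{ source} \mid v \text{ proxy}) = \mathbb{P}(u \text{ source}) \]
for every pair $u,v \in \mathcal{U}$.

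The first step is to compute $\mathbb{P}(v \text{ proxy} \mid u \text{ source})$. By hypothesis, the source $u$ picks its proxy uniformly at random from $\mathcal{U}$, so this conditional probability equals $1/|\mathcal{U}|$ for every choice of $v$. Crucially, this value is independent of $u$: once the source has been fixed, the proxy distribution on $\mathcal{U}$ is uniform. Subsequent details of Protocol~\ref{prot1} (the choice of shortest path, the intermediate proxies $u_i$, the message spaces $M_i$ used) are invisible to the external observer, so they do not need to be tracked.

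The second step is to marginalise. Using the law of total probability together with the uniform value computed in the first step,
\[ \mathbb{P}(v \text{ proxy}) = \sum_{u \in \mathcal{U}} \mathbb{P}(u \text{ source}) \, \mathbb{P}(v \text{ proxy} \mid u \text{ source}) = \frac{1}{|\mathcal{U}|}\sum_{u \in \mathcal{U}} \mathbb{P}(u \text{ source}) = \frac{1}{|\mathcal{U}|}. \]
Plugging these into Bayes' rule yields
\[ \mathbb{P}(u \text{ source} \mid v \text{ proxy}) = \frac{\mathbb{P}(v \text{ proxy} \mid u \text{ source}) \, \mathbb{P}(u \text{ source})}{\mathbb{P}(v \text{ proxy})} = \mathbb{P}(u \text{ source}), \]
which is the required independence.

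The final ingredient is the per-query independence assumption: because proxies for distinct queries are chosen independently, an observer watching many queries cannot combine observations to break privacy on any single query; the argument above applies verbatim to each query. The only potential obstacle is conceptual rather than technical, namely justifying that connectedness of the bipartite graph $(\mathcal{U}\cup\mathcal{M},E)$ is the correct hypothesis. This is used solely to guarantee that once $u$ has selected any $v \in \mathcal{U}$, a path from $u$ to $v$ in the UPIR graph exists so that Protocol~\ref{prot1} can actually deliver the request, making the uniform proxy distribution realisable.
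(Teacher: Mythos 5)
Your argument is correct: since each source selects its proxy uniformly at random (and independently across queries), the likelihood $\mathbb{P}(v \text{ is the proxy} \mid u \text{ is the source}) = 1/|\mathcal{U}|$ does not depend on $u$, so Bayes' rule gives equality of posterior and prior, and connectedness is indeed only needed so that every choice of proxy is realisable by the protocol. Note that the paper itself gives no proof of this statement --- it is imported by citation from Swanson and Stinson --- and your Bayes-rule computation is essentially the standard argument behind their Theorems 6.1 and 6.2, so there is nothing to flag beyond the mild hand-waving in the multi-query step (strictly, one should observe that the joint likelihood of any sequence of observed proxies is $1/|\mathcal{U}|^{k}$, again independent of the source, which follows immediately from the independence hypothesis).
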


Protocol \ref{prot1} can be implemented on any connected UPIR scheme;
and Theorem \ref{thm:anon} shows that the scheme is private against external observers.

More recent research on UPIR has aimed at preserving
privacy with respect to other users in the UPIR scheme, under the assumption that users
are \textit{honest but curious}. That is, they act according to Protocol \ref{prot1}, but they may attempt to determine the source of any queries that they observe.

Since the message space into which a request is written already reveals non-trivial
information about the source of the request, perfect privacy with respect to other
users is, in general, impossible. For example: in a PBD-UPIR scheme, it can be
inferred that the source of a request written to the message space $M$ is a user
with access to $M$.

First, we develop a criterion for judging whether a UPIR system is secure in terms of maintaining users' privacy.
Our analysis will be based on \textit{linked queries}, which are a series
of queries which are identifiable as coming from a single
source. These were first introduced by Swanson and Stinson, who
provided the example of a series of requests for information about a fixed,
obscure topic \cite{SwansonStinsonI}.

\begin{definition}
Let $C$ be a coalition of users, collaborating to identify the source of a series of linked queries.
Users $u$ and $v$ are \textit{pseudonymous} with respect to $C$ if for any message space $M$ to which $C$
has access, and for which $\mathbb{P}( v \textrm{ is source} \mid Q \in M) \neq 0$,
\[ \frac{ \mathbb{P}(u \textrm{ is the source of } Q \mid Q \in M) }{  \mathbb{P}( u \textrm{ is the source of } Q)} =
 \frac{ \mathbb{P}( v \textrm{ is the source of } Q \mid Q \in M) }{  \mathbb{P}( v \textrm{ is the source of } Q)} \,.\]
\end{definition}

We allow the possibility that a coalition has non-trivial prior information on the probability that user $u$
wishes to evaluate query $Q$. In our analysis we focus on the case where this information is limited, and users
cannot be identified by their queries alone. The next result follows directly from the definition of pseudonymity,
but we record it since we will have use for it in later sections.

\begin{lemma}
Pseudonymity with respect to the coalition $C$ is an equivalence relation on the users of a UPIR scheme.
\end{lemma}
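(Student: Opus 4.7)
The plan is simply to verify the three defining axioms of an equivalence relation---reflexivity, symmetry, and transitivity---each of which follows immediately from the corresponding property of equality of real numbers. No substantive calculation is involved; the only care needed is in tracking the hypotheses on the conditional probabilities appearing in the definition of pseudonymity.

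Reflexivity is tautological: the defining equation relating $u$ to itself reads $\mathbb{P}(u \text{ is source} \mid Q \in M)/\mathbb{P}(u \text{ is source of } Q)$ equals itself, which holds for every message space $M$. Symmetry follows because the defining equation is manifestly symmetric in $u$ and $v$; the accompanying hypothesis $\mathbb{P}(v \text{ is source} \mid Q \in M) \neq 0$ is naturally read as a symmetric nondegeneracy condition on the triple $(u,v,M)$ ensuring that the ratios being compared are meaningful, so the symmetry of the defining equation transfers to the relation itself.

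For transitivity, suppose $u \sim v$ and $v \sim w$ with respect to $C$, and fix a message space $M$ to which $C$ has access for which the ratio hypothesis holds. Invoking $v \sim w$ yields equality of the ratios for $v$ and $w$; since this common ratio is nonzero, one deduces $\mathbb{P}(v \text{ is source} \mid Q \in M) \neq 0$, which unlocks the hypothesis needed to apply $u \sim v$. Chaining the two ratio equalities then produces the equality of the ratios for $u$ and $w$, which is the condition required to conclude $u \sim w$. The main (and minor) obstacle is purely notational: one must verify that the nonzero-conditional-probability hypotheses propagate correctly under composition, so that each invocation of the hypothesis $u \sim v$ or $v \sim w$ is legitimate.
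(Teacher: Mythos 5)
The paper offers no argument at all here (it simply asserts the lemma ``follows directly from the definition''), so your direct verification of reflexivity, symmetry and transitivity is exactly the intended route, and reflexivity is indeed tautological. However, your treatment of the nondegeneracy hypothesis is internally inconsistent, and this hides a genuine wrinkle. As literally written, the definition attaches the condition $\mathbb{P}(v \text{ is source} \mid Q \in M) \neq 0$ only to the \emph{second}-named user. Your transitivity argument exploits precisely this literal reading: you invoke $v \sim w$ at a message space $M$ where the hypothesis for the pair $(u,w)$ holds, i.e.\ where the conditional probability for $w$ is nonzero, and that is the hypothesis needed for $v \sim w$; you then deduce that the common ratio, hence the conditional probability for $v$, is nonzero, unlocking $u \sim v$. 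That chain is fine. But your symmetry argument abandons the literal reading and declares the hypothesis ``naturally read as a symmetric nondegeneracy condition.'' Under the literal reading, symmetry is \emph{not} automatic: $u \sim v$ only forces $\mathbb{P}(u \mid Q \in M) \neq 0$ wherever $\mathbb{P}(v \mid Q \in M) \neq 0$ (one containment of supports), and a message space in which $u$ can be the source but $v$ cannot would make $v \sim u$ fail while leaving $u \sim v$ intact. Conversely, under the symmetric ``both probabilities nonzero'' reading you adopt for symmetry, your transitivity step becomes circular: invoking $v \sim w$ at $M$ would already require $\mathbb{P}(v \mid Q \in M) \neq 0$, which is exactly what you are trying to deduce, and without it transitivity can fail in the abstract (take $M$ where $u$ and $w$ have nonzero probability with unequal ratios but $v$ has probability zero).

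The clean repair is to fix one reading and carry it through: for instance, require the ratio equality at every $C$-accessible $M$ at which \emph{at least one} of the two conditional probabilities is nonzero (equivalently, observe that pseudonymous users must have identical sets of message spaces in which they can appear as source, given nonvanishing priors). Under that convention the supports of pseudonymous users coincide, and reflexivity, symmetry and transitivity all follow by the ratio-chaining you describe, with no circularity. Since the paper itself leaves the definition informal, this is a matter of tightening your write-up rather than of a wrong approach, but as it stands the symmetry and transitivity paragraphs cannot both be correct for the same relation.
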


The coalition $C$ can resolve the identity of a user $u$ if and only if $u$ belongs to a
pseudonymity class of size $1$ with respect to $C$. Users $u$ and $v$ maintain
pseudonymity with respect to $C$ after arbitrarily many requests have been observed
if and only if they lie in the same pseudonymity class. We propose the following definition for security.

\begin{definition}\label{defn:priv}
Let $(\mathcal{V}_i)$ be a family of UPIR schemes indexed by $i \in \mathbb{N}$, where the number of users in $\mathcal{V}_{i}$ is $n_{i}$.
We say that $\mathcal{V}_{i}$ is secure against coalitions of size $t$ if the pseudonymity relation of any coalition of size $t$
contains a \textit{giant component}, \emph{i.e.}, the union of all other components has size $O(n_{i}^{1-\epsilon})$ for some $\epsilon > 0$. The family $(\mathcal{V}_i)$ is \textit{secure} if for every $t$ there exists $N \in \mathbb{N}$ such that $\mathcal{V}_{i}$ is secure against coalitions of size $t$ for all $i \geq N$.
\end{definition}

Informally, we consider a UPIR scheme to be secure if any coalition of size at most $t$ can observe only a negligible portion of the system.
Equivalently, for any fixed coalition $C$ of bounded size a randomly chosen subset of users, of limited size, will be mutually pseudonymous with respect to $C$ with high probability. Our first result is that families of PBD-UPIR schemes are never secure.

\begin{theorem}\label{distance1}
In a PBD-UPIR scheme, a single eavesdropper can resolve the identity of any user who makes sufficiently many linked queries.
Equivalently, for any coalition of eavesdroppers, every pseudonymity class has size $1$.
\end{theorem}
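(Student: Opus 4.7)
The plan is to exploit the defining property of a PBD---that every pair of users lies in a common block---to show that a single eavesdropper $e$ can, over enough linked queries, identify any source user $u\neq e$. Formally, I will show that with respect to the coalition $\{e\}$ every pseudonymity class is a singleton; since strengthening the coalition can only refine the pseudonymity partition, this yields the statement for arbitrary $C$.

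Fix $u\neq e$ and let $M_{u,e}$ denote the unique block containing both $u$ and $e$ (if the PBD has $\lambda>1$, fix any such common block as a witness). In a PBD the incidence graph has diameter $2$, so Protocol~\ref{prot1} reduces to: $u$ picks a proxy $v$ uniformly at random from all users and writes $[(v),Q]$ directly into a block $M_{u,v}$ shared by $u$ and $v$. Eavesdropper $e$ sees this write if and only if $e\in M_{u,v}$, which, by uniqueness of the block through any pair, is equivalent to $v\in M_{u,e}$ together with $M_{u,v}=M_{u,e}$. Consequently every one of $u$'s queries that $e$ observes appears in the single block $M_{u,e}$, carrying a proxy drawn uniformly from $M_{u,e}\setminus\{u\}$.

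A coupon-collector argument then shows that after sufficiently many linked queries from $u$, with probability tending to $1$ every element of $M_{u,e}\setminus\{u\}$ appears at least once as a proxy; since $u$ never proxies its own queries, $u$ is identified as the unique missing point of $M_{u,e}$. The same analysis delivers the pseudonymity formulation: if $w\neq u$ lay in $u$'s pseudonymity class with respect to $\{e\}$, the joint distribution of (message space, proxy) observed by $e$ under the two sources would have to coincide, forcing $M_{u,e}=M_{w,e}=:L$ and $L\setminus\{u\}=L\setminus\{w\}$, hence $u=w$.

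The main obstacle is not computational but definitional: the observation by $e$ must include the proxy field of the written message rather than only the identity of the message space, since conditioning on the message space alone leaves a residual class of size $|M_{u,e}|-1$, consisting of all users sharing that block with $e$. A secondary bookkeeping point is the possibility that $(u,e)$ lies in several common blocks, handled by fixing a witness block at the outset and re-running the argument verbatim.
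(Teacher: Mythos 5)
Your proposal is correct and follows essentially the same route as the paper's proof: the eavesdropper confines attention to the unique block shared with the source, observes proxies drawn uniformly from that block minus the source (who never addresses himself), and identifies the source as the unique member never addressed, with the coupon-collector/probability-one step matching the paper's appeal to Theorem~\ref{thm:anon}. Your closing observation that the eavesdropper must use the proxy field, not just the message space, is exactly what the paper's argument implicitly relies on, and the aside about $\lambda>1$ is unnecessary since a PBD here has every pair in exactly one block.
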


\begin{proof}
Suppose that $u$ makes a series of linked queries.
An eavesdropper $c$ will observe a subset of these queries in the unique
message space $M$ shared by $c$ and $u$, and will never observe linked queries in
any other message space to which he has access. Since users do not write queries
addressed to themselves\footnote{In fact, Theorem \ref{thm:anon} forces
users to act as their own proxy equally often as any other user.
Even if users write requests addressed to themselves to message spaces, the
combinatorics of PBDs prohibit such requests from appearing with the same
frequency as those addressed to other users. Sources would hence still be resolvable,
though the result would now be probabilistic.}
 $c$ will be able to identify $u$ as soon as he has observed a linked query addressed to every other possible user in $M$. Provided that $u$ follows the requirements of Theorem \ref{thm:anon}, $c$ will observe the required queries with probability $1$. 
\end{proof}

In fact, a pair of collaborating users $c_{1}$ and $c_{2}$ can
identify $u$ far more quickly. If $c_{1}$ observes a query in
the message space $M_{1}$ and $c_{2}$ observes a linked
query in $M_{2}$, then the collaborators can conclude that
the source of the requests is a user in $M_{1} \cap M_{2}$.
But in a PBD-UPIR scheme, such a user is unique. This is
called an \textit{intersection attack} in \cite{SwansonStinsonI}.
Theorem \ref{distance1} can be easily modified
to identify all users in any UPIR scheme in which every pair
of users share at least one message space. In particular, all of the
UPIR schemes proposed by Swanson and Stinson to circumvent the
intersection attack are still vulnerable to Theorem \ref{distance1}; although it will take more linked queries to identify the source.

To protect against the attack outlined in \Cref{distance1} we suggest using a different incidence structure that we will introduce in the following section.

\section{Generalised Quadrangles}

In this section we introduce \textit{generalised quadrangles (GQ)}. For the sake of completeness, we include proofs of some well-known results, for further reading
see \cite{PayneThas}. We will show that the bipartite incidence graphs
of generalised quadrangles have diameter $4$. So in a UPIR scheme
derived from a GQ (GQ-UPIR scheme in short), a pair of users either shares
a message space, or there exists a third user sharing message spaces with
each of the first two. As a result, when users communicate along a shortest
path, a message is written to at most $2$ message spaces. In this section, we
use the usual language of incidence geometry; in a GQ-UPIR scheme, users
are labelled by \textit{points} and message spaces by \textit{blocks}.

\begin{definition}
A \textit{generalised quadrangle} is an incidence structure containing points and blocks in which two blocks can intersect in at most one point, and which satisfies the \textbf{GQ Axiom}: given any point $x$ and block $L$ that does not contain $x$, there is a unique point $x'$ in $L$ that shares a block with $x$.
\end{definition}

Even though we are dealing with an abstract incidence structure, there is a natural representation of this structure as a geometry.  It is traditional for the blocks in a generalised quadrangle to be referred to as \textit{lines}.  Indeed, a generalised quadrangle is so-named because there are no \textit{triangles} (three lines intersecting pairwise in three distinct points) but numerous \textit{quadrangles} in such a geometry.

 \begin{lemma}\label{triangle-free}
 There are no non-degenerate triangles in a generalised quadrangle, but any two non-collinear points are contained in a quadrangle.
 \end{lemma}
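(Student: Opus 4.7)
The plan is to handle the two parts of the lemma separately, with both arguments driven by the uniqueness clause in the GQ Axiom together with the fact that two lines meet in at most one point.

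For the first part, suppose for contradiction that three lines $L_1, L_2, L_3$ pairwise intersect in three distinct points $p_{ij} = L_i \cap L_j$. I would first observe $p_{12}\notin L_3$: if it were, then $p_{12}$ and $p_{13}$ would both lie in $L_1 \cap L_3$, which has size at most one, forcing $p_{12}=p_{13}$ and contradicting distinctness. Now apply the GQ Axiom to the point $p_{12}$ and the line $L_3$: there must be a unique point of $L_3$ collinear with $p_{12}$. But both $p_{13}$ (via $L_1$) and $p_{23}$ (via $L_2$) are points of $L_3$ collinear with $p_{12}$, contradicting uniqueness.

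For the second part, let $x$ and $y$ be non-collinear. Pick any line $L$ through $y$; since $x$ is not collinear with $y$, we have $x \notin L$, so the GQ Axiom yields a unique point $y' \in L$ collinear with $x$, and $y'\neq y$. Let $M$ denote the (unique) line joining $x$ and $y'$. Now choose a second line $L'\neq L$ through $y$ (here one needs the mild non-degeneracy assumption, implicit in the standard theory of GQs, that every point lies on at least two lines); again $x\notin L'$, so the GQ Axiom gives a unique point $y''\in L'$ collinear with $x$, and $M' := xy''$ is well-defined. The four lines $L, M, L', M'$ pass in order through the points $y, y', x, y'', y$, and to check these four points are distinct (so that we genuinely have a quadrangle) it suffices to verify $y'\neq y''$: if $y'=y''$ then $L$ and $L'$ would share the two distinct points $y$ and $y'$, violating the intersection property. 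Hence $x,y,y',y''$ form a quadrangle containing $x$ and $y$.

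The only subtle point is the second part: the bare definition as stated does not preclude degenerate incidence structures with a single line through some point, so one must invoke the standard convention that a GQ is thick (or at least that every point is on at least two lines) in order to produce the auxiliary line $L'$. Once that is granted, both parts are short consequences of the uniqueness in the GQ Axiom and the fact that distinct lines meet in at most one point.
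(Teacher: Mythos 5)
Your proof is correct and follows essentially the same route as the paper's: uniqueness in the GQ Axiom rules out a triangle, and applying the axiom to two lines through $y$ produces the quadrangle. You are in fact slightly more careful than the paper (explicitly verifying $p_{12}\notin L_3$, checking distinctness of the four points, and flagging the implicit assumption that every point lies on at least two lines), but the underlying argument is identical.
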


 \begin{proof}
 Recall that a \textit{triangle} is a triple of distinct lines $L_{1}, L_{2}, L_{3}$ with pair-wise non-empty intersections, say $x_{ij} \in L_{i} \cap L_{j}$.  Note that $x_{12}$ is collinear with both $x_{13}$ and $x_{23}$.  By the GQ Axiom, if $x_{12} \notin L_3$, then there exists a \textit{unique} point on $L_3$ collinear with $x_{12}$: in other words, $x_{13}$ and $x_{23}$ cannot both be collinear with $x_{12}$, and there are no triangles.

 On the other hand, consider two non-collinear points $x$ and $y$, and consider two lines $L_1$ and $L_2$ incident with $y$.  The point $x$ in not incident with either $L_1$ or $L_2$, and, by the GQ Axiom, there is a point $w$ on $L_1$ and a point $z$ on $L_2$ such that $x$ is collinear with both $w$ and $z$.  If the line incident with both $x$ and $z$ is $L_3$ and the line incident with both $x$ and $w$ is $L_4$, then the quadruple of distinct lines $L_1$, $L_2$, $L_3$, $L_4$ is the desired quadrangle.
 \end{proof}

For a point $x$ in a generalised quadrangle $\mathcal{Q}$ we write $B_{1}(x)$ for the set of points collinear with $x$. By convention, $x \notin B_{1}(x)$. Suppose that $y \notin \{x \} \cup B_{1}(x)$, and let $L$ be any line through $x$. By the GQ axiom, $y$ is collinear to a unique point on $L$, so $y$ is at distance $2$ from $x$, which we denote by $y \in B_{2}(x)$. In fact, since the choice of $L$ was arbitrary, we obtain a bijection: every line through $x$ intersects a unique line through $y$. Hence every point lies on a fixed number $t+1$ of lines. A similar argument shows that all lines have the same size, which we denote $s+1$. The standard definition in the literature is to say that a finite generalised quadrangle has \textit{order} $(s,t)$ if there are $s+1$ points incident with a given line and $t+1$ lines incident with a given point. Routine counting arguments can be used to establish the following well-known result.

\begin{lemma} \label{point-enum}
The number of points in a finite generalised quadrangle of order $(s,t)$ is $(s+1)(st+1)$.
For any point $x$ in the GQ, there are $s(t+1)$ points in $B_1(x)$ and $s^2t$ points in $B_2(x)$.
\end{lemma}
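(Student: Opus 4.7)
The plan is to fix a point $x$ and exploit the GQ axiom together with Lemma \ref{triangle-free} to partition the point set of $\mathcal{Q}$ as $\{x\} \cup B_1(x) \cup B_2(x)$, then count each part by the definition plus one double-count. The trichotomy is immediate: if $y \neq x$ is not collinear with $x$, pick any line $L$ through $x$; the GQ axiom provides a point on $L$ collinear with $y$, hence $y \in B_2(x)$.

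First I would compute $|B_1(x)|$ directly. The point $x$ lies on exactly $t+1$ lines, each containing $s$ further points. Any two distinct lines through $x$ meet only in $x$ (since blocks intersect in at most one point), so the $t+1$ sets of "other points on a line through $x$" are pairwise disjoint and their union is $B_1(x)$, giving $|B_1(x)| = s(t+1)$.

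Next, to compute $|B_2(x)|$, I would double-count the set of pairs $I = \{(z,y) : z \in B_1(x),\ y \in B_2(x),\ y \in B_1(z)\}$. For the $y$-fibre over fixed $z \in B_1(x)$: among the $s(t+1)$ points collinear with $z$, I must exclude those lying in $\{x\} \cup B_1(x)$. Clearly $x$ and the $s-1$ other points on the line $xz$ are excluded, accounting for $s$ points. The key step, which I expect to be the main obstacle, is to show that no other point of $B_1(z)$ lies in $B_1(x)$: any such point would yield, together with $x$ and $z$, three pairwise collinear points on three distinct lines, i.e., a non-degenerate triangle, contradicting Lemma \ref{triangle-free}. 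Hence each $z$ contributes $s(t+1) - s = st$ pairs. For the $z$-fibre over fixed $y \in B_2(x)$: the GQ axiom produces a unique point $z_L \in L \cap B_1(y)$ for each of the $t+1$ lines $L$ through $x$, and each such $z_L$ is in $B_1(x)$ (it lies on $L$ and is not $x$, since $y$ is not collinear with $x$); the $z_L$ are distinct because two lines through $x$ only share $x$. Thus each $y$ contributes exactly $t+1$ pairs, and $|I| = (t+1)|B_2(x)| = st \cdot s(t+1)$, yielding $|B_2(x)| = s^2 t$.

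Finally, the total point count is
\[
1 + s(t+1) + s^2 t = 1 + s + st + s^2 t = (s+1)(st+1),
\]
as required. The only delicate ingredient is the triangle-free argument used to excise $B_1(x) \cap B_1(z)$ down to the single line $xz$; everything else is bookkeeping on top of the GQ axiom.
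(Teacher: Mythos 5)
Your proof is correct and takes essentially the same approach as the paper: both arguments rest on the GQ axiom (each point of $B_2(x)$ is collinear with exactly one point on each line through $x$) together with triangle-freeness (each point of $B_1(x)$ is collinear with exactly $st$ points of $B_2(x)$). The only cosmetic difference is that the paper partitions $B_2(x)$ directly via the $s$ non-$x$ points of a single fixed line $L$ through $x$, whereas you double-count incident pairs over all of $B_1(x)$ and divide by $t+1$; your write-up also makes explicit the triangle-free exclusion step that the paper leaves implicit.
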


\begin{proof}
There are $t + 1$ lines through $x$, each containing $s$ points distinct from $x$.
Since a GQ contains no non-trivial triangles, these lines are disjoint (outside of $x$).
So there are $s(t+1)$ points collinear with $x$, and $|B_1(x)| = s(t+1)$.

Consider now a point $y$ in $B_2(x)$.  Since $y \notin B_1(x)$, $y$ is not incident
with any line through $x$; choose such a line $L$.  By the GQ Axiom, $y$ is collinear
with a unique point on $L$.  Since there are $s$ points on $L$ other than $x$, and each
of these points is collinear with $s(t+1) - s = st$ points not on $L$, there are exactly $s\cdot st = s^2t$ points in $B_2(x)$. 
\end{proof}

The following result, due to D.G. Higman, shows that the parameters $s$ and $t$ cannot differ by too much, in general.

\begin{theorem}[\cite{HigmanGQ}]\label{lem:GQst}
In a finite generalised quadrangle of order $(s,t)$, if $s>1$ and $t>1$, then $s \leqslant t^2$ and $t \le s^2$.
\end{theorem}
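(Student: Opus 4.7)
The plan is to fix two non-collinear points $x$ and $y$ (which exist since $s, t \geqslant 2$ forces $B_2(x) \neq \emptyset$) and extract the inequality $t \leqslant s^2$ from a Cauchy--Schwarz estimate applied to a suitable double count. The bound $s \leqslant t^2$ then follows by point--line duality (interchanging the roles of points and lines in a generalised quadrangle of order $(s,t)$ produces one of order $(t,s)$). Throughout, let $T$ denote the set of points collinear with both $x$ and $y$.

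First I would assemble the basic counts. By the GQ axiom applied to $y$ and each of the $t+1$ lines through $x$, $|T| = t+1$. Partitioning $B_2(x)$ as the disjoint union of $\{y\}$, $B_1(y) \setminus T$, and $B_2(x) \cap B_2(y)$, and using Lemma~\ref{point-enum}, gives
\[ N := |B_2(x) \cap B_2(y)| = s^2 t - 1 - (s-1)(t+1). \]
For each $w \in B_2(x) \cap B_2(y)$ I set $k(w) := |T \cap B_1(w)|$.

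The core is a pair of double counts for $\sum_w k(w)$ and $\sum_w k(w)(k(w)-1)$. For $\sum_w k(w)$, I fix $z \in T$ and count points $w \in B_2(x) \cap B_2(y)$ collinear with $z$: the point $z$ lies on $t+1$ lines, two of which (namely the lines $xz$ and $yz$) must be discarded, and on each of the remaining $t-1$ lines the $s$ points other than $z$ all lie in $B_2(x) \cap B_2(y)$ --- a triangle involving $x$ or $y$ would otherwise contradict Lemma~\ref{triangle-free}. Hence $\sum_w k(w) = s(t^2-1)$. For $\sum_w k(w)(k(w)-1)$, I first observe that any two distinct $z, z' \in T$ are non-collinear, for otherwise $x, z, z'$ would form a triangle; so there are exactly $t-1$ points $u$ collinear with both $z$ and $z'$ besides $x$ and $y$. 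A further triangle-chasing argument shows each such $u$ is non-collinear with both $x$ and $y$: were $u$ collinear with $x$, the triangles on $\{u, x, z\}$ and on $\{u, x, z'\}$ would force $u$ to lie on the distinct lines $xz$ and $xz'$, hence $u \in xz \cap xz' = \{x\}$, a contradiction. Summing over the $(t+1)t$ ordered pairs in $T$ yields $\sum_w k(w)(k(w)-1) = t(t^2-1)$, whence $\sum_w k(w)^2 = (s+t)(t^2-1)$.

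Cauchy--Schwarz now gives $\bigl(\sum_w k(w)\bigr)^2 \leqslant N \sum_w k(w)^2$, which after dividing by $t^2-1 > 0$ reduces to $s^2(t^2-1) \leqslant N(s+t)$. A short expansion produces the clean factorisation
\[ N(s+t) - s^2(t^2-1) = (s-1)\,t\,(s^2 - t), \]
whose non-negativity forces $s^2 \geqslant t$. Applying the same argument inside the dual GQ (of order $(t,s)$) yields $s \leqslant t^2$. The step I expect to be most delicate is the second double count: the two triangle-avoidance claims --- that distinct points of $T$ are non-collinear, and that their common neighbours outside $\{x,y\}$ automatically lie in $B_2(x) \cap B_2(y)$ --- each require a short case analysis on potential triangles, and this is the only place where the full GQ axiom, rather than routine vertex counting, does the real work.
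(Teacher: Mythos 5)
Your proposal is correct, but note that the paper does not prove this statement at all: it is quoted as Higman's theorem with a citation to \cite{HigmanGQ}, so there is no in-paper argument to compare against. What you have reconstructed is the standard double-counting proof of Higman's inequality (essentially Cameron's argument): fix non-collinear $x,y$, count flags between $T=B_1(x,y)$ and $B_2(x)\cap B_2(y)$ once and twice, and apply Cauchy--Schwarz. I checked the numbers: $N=s^2t-st-s+t$, $\sum_w k(w)=s(t^2-1)$, $\sum_w k(w)(k(w)-1)=t(t^2-1)$, and indeed $N(s+t)-s^2(t^2-1)=(s-1)t(s^2-t)$, so non-negativity gives $t\le s^2$; the dual of a GQ of order $(s,t)$ is one of order $(t,s)$ (the paper's axioms dualise correctly, since ``two lines meet in at most one point'' dualises to ``two points lie on at most one common line,'' which holds), so $s\le t^2$ follows. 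One small elision worth patching: your claim that distinct $z,z'\in T$ are non-collinear ``since otherwise $x,z,z'$ would form a triangle'' does not cover the degenerate case in which $x,z,z'$ lie on a common line $L$ (then there is no triangle). The clean fix is to argue directly from the GQ Axiom: if $z\sim z'$ via a line $L$, then $x$ and $y$ cannot both lie on $L$ (they are non-collinear), and whichever of them is off $L$ would be collinear with the two distinct points $z,z'$ of $L$, contradicting uniqueness. Your other two triangle-avoidance steps (points on the remaining $t-1$ lines through $z$ lying in $B_2(x)\cap B_2(y)$, and common neighbours $u\notin\{x,y\}$ of $z,z'$ being non-collinear with $x$ and $y$) are sound as written, because in those configurations the degenerate collinear case is automatically excluded. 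With that one-line repair, your argument is a complete, self-contained proof of the cited theorem, which is arguably a nice addition given that the paper otherwise leaves it as a black box.
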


Our analysis of pseudonymity relations in a GQ-UPIR scheme will require the concept of a \textit{hyperbolic line} in a GQ, which we introduce now. In a finite generalised quadrangle $\mathcal{Q}$ of order $(s,t)$, given any two non-collinear points $x$ and $y$, by the GQ Axiom, there is a collection $\mathcal{C}$ of exactly $t+1$ points collinear with both $x$ and $y$.  Thus there are at least two points, $x$ and $y$, that are collinear with all the points in $\mathcal{C}$, but there could be more.

\begin{definition}\label{def:span}
Given a set of pairwise non-collinear points $\mathcal{X}$ in a finite generalised quadrangle, we define $B_1(\mathcal{X})$ to be the set of points collinear with each point in $\mathcal{X}$, \emph{i.e.}, $B_1(\mathcal{X}) = \bigcap_{x \in \mathcal{X}}\limits B_1(x)$.

We define the \textit{span} of $\mathcal{X}$ to be the set of points collinear with every point of $B_1(\mathcal{X})$, \emph{i.e.},
%\[
$\sp(\mathcal{X}) = B_1(B_1(\mathcal{X})) = \bigcap_{z \in \bigcap_{x \in \mathcal{X}}\limits B_1(x)}\limits B_1(z).$
%\]
\end{definition}

When $X = \{x_1, \dots, x_m\}$, we often write $B_1(x_1, \dots, x_m)$ to denote $B_1(\mathcal{X})$ and $\sp(x_1, \dots, x_m)$ to denote $\sp(\mathcal{X})$.  Note that, for non-collinear points $x$ and $y$ in a generalised quadrangle of order $(s,t)$ we have $\{x,y\} \subseteq \sp(x,y)$ and, by the GQ Axiom, $|B_1(x,y)| = t+1$.  The set $\sp(x,y)$ is often referred to as the \textit{hyperbolic line} defined by $x$ and $y$.  The following results show that hyperbolic lines have incidence properties similar to those of ordinary lines.

\begin{lemma}\label{lem:span1}
If $a \in \sp(x,y)$, then $\sp(a,x) = \sp(x,y)$.
\end{lemma}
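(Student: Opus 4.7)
The plan is to reduce the identity $\sp(a,x) = \sp(x,y)$ to the set-equality $B_1(a,x) = B_1(x,y)$, since then applying $B_1$ once more and using $\sp(\mathcal{X}) = B_1(B_1(\mathcal{X}))$ concludes the argument. The degenerate cases $a=x$ and $a=y$ are trivial (the latter gives the identity directly), so I focus on $a \notin \{x,y\}$.

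The first task is to check that $a$ and $x$ are non-collinear; this is needed both to make $\sp(a,x)$ a genuine hyperbolic line and to ensure $|B_1(a,x)| = t+1$ via the count recalled in the paragraph preceding the lemma. Suppose for contradiction that $a$ and $x$ lie on a common line $L$. For any $z \in B_1(x,y)$, the hypothesis $a \in \sp(x,y) = B_1(B_1(x,y))$ forces $z$ to be collinear with $a$, and $z$ is collinear with $x$ by definition. Hence the triple $\{a,x,z\}$ is pairwise collinear, and by Lemma \ref{triangle-free} it must lie on a single line, forcing $z \in L$. But every such $z$ is also collinear with $y$, and $y \notin L$ (otherwise $x$ and $y$ would be collinear, contrary to the standing assumption). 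The GQ Axiom then allows at most one such $z$ on $L$, contradicting $|B_1(x,y)| = t+1 \ge 2$.

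With non-collinearity of $a$ and $x$ in hand, the inclusion $B_1(x,y) \subseteq B_1(a,x)$ is immediate: any $z \in B_1(x,y)$ is collinear with $x$ by definition and with $a$ because $a \in B_1(B_1(x,y))$. Since both $B_1(x,y)$ and $B_1(a,x)$ have cardinality $t+1$, this inclusion is actually an equality. Applying $B_1$ to both sides yields
\[
\sp(a,x) = B_1(B_1(a,x)) = B_1(B_1(x,y)) = \sp(x,y),
\]
as desired.

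The only genuinely non-routine step is establishing non-collinearity of $a$ and $x$; once that is secured, the rest is a direct chase through the definitions combined with the standard enumeration $|B_1(p,q)| = t+1$ for non-collinear $p,q$.
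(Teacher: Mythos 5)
Your proof is correct and follows essentially the same route as the paper's: you establish the inclusion $B_1(x,y) \subseteq B_1(a,x)$, use the count $|B_1(x,y)| = |B_1(a,x)| = t+1$ to force equality, and apply $B_1$ to both sides; your explicit verification that $a$ and $x$ are non-collinear (needed for the count $t+1$ and for $\sp(a,x)$ to be a genuine hyperbolic line) is a point the paper leaves implicit, and is a welcome extra. The one loose remark is calling the case $a = x$ ``trivial'': read literally, $\sp(x,x) = B_1(B_1(x)) = \{x\}$ in a thick generalised quadrangle, so that case is really excluded by convention (as the paper also tacitly does) rather than being an instance of the statement.
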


\begin{proof}
Let $a \in\sp(x,y)$.  Because $a \in \sp(x,y) = B_1(B_1(x,y))$, $a$ is collinear with each point in $B_1(x,y)$.  Since $B_1(x,y) = B_1(a,x,y) \subseteq B_1(a,x)$, we have
\[t+1 = |B_1(x,y)| = |B_1(a,x,y)| \le |B_1(a,x)| = t+1.\]
Hence $B_1(x,y) = B_1(a,x)$, and therefore
%\[
$\sp(x,y) = B_1(B_1(x,y)) = B_1(B_1(a,x)) \\
= \sp(a,x).$
%\]
\end{proof}

\begin{corollary}\label{cor:span2}
If $|\sp(x,y) \cap \sp(w,z)| > 1$, then $\sp(x,y) = \sp(w,z)$.
\end{corollary}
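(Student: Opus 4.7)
The plan is to reproduce the classical argument that two lines sharing at least two points must coincide, with hyperbolic lines in place of genuine lines and Lemma \ref{lem:span1} playing the role of an incidence axiom. I will pick two distinct points $a, b$ in the intersection $\sp(x,y) \cap \sp(w,z)$ and show that $\sp(a,b)$ equals both $\sp(x,y)$ and $\sp(w,z)$.

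Before invoking Lemma \ref{lem:span1} on the pair $(a,b)$, I need to check that $a$ and $b$ are non-collinear, so that $\sp(a,b)$ is a well-defined hyperbolic line. If instead $a$ and $b$ lay on a common line $L$, then every $p \in B_1(x,y)$ would be collinear with the two distinct points $a, b \in L$; the GQ Axiom then forces $p \in L$. Because $|B_1(x,y)| = t+1 \ge 2$, the same reasoning applied to $x$ and to $y$ (each of which is collinear with every point of $B_1(x,y) \subseteq L$) would force both $x$ and $y$ onto $L$, contradicting the assumed non-collinearity of $x$ and $y$ in $\sp(x,y)$.

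With non-collinearity of $a$ and $b$ in hand, two applications of Lemma \ref{lem:span1} finish the argument. The first, using $a \in \sp(x,y)$, yields $\sp(a,x) = \sp(x,y)$. Since $b \in \sp(x,y) = \sp(a,x)$, a second application (with $b$ playing the role of the new point and $(a,x)$ playing the role of the reference pair) gives $\sp(a,b) = \sp(a,x) = \sp(x,y)$. An identical pair of applications with $(w,z)$ in place of $(x,y)$ gives $\sp(a,b) = \sp(w,z)$, and the two identities combine to give $\sp(x,y) = \sp(w,z)$.

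The only non-routine step is the non-collinearity verification; everything else is a mechanical double application of Lemma \ref{lem:span1}, exactly analogous to deducing equality of two ordinary lines from a two-point intersection.
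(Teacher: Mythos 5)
Your proof is correct and follows essentially the same route as the paper's: pick two distinct points $a,b$ in the intersection and chain applications of Lemma \ref{lem:span1} to get $\sp(x,y)=\sp(a,x)=\sp(a,b)=\sp(w,a)=\sp(w,z)$. The only difference is your explicit verification that $a$ and $b$ are non-collinear (so that $\sp(a,b)$ is well defined), a point the paper leaves implicit; your argument for it is sound.
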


\begin{proof}
Let $\{a,b\} \subseteq \sp(x,y) \cap \sp(w,z)$.  By Lemma \ref{lem:span1},
%\[
$\sp(x,y) = \sp(a,x) = \sp(a,b) = \sp(w,a) = \sp(w,z).$
%\]
\end{proof}

We end this section by collecting relevant information about the families of \textit{classical generalised quadrangles}.  These families are related to certain classical groups, and are thus highly symmetric.  In each case, the size of spans of sets of a given type are constant, regardless of which points within the generalised quadrangle are chosen.  In the following table, $q$ is a prime power, and $x$, $y$, and $z$ are three mutually noncollinear points.

\begin{table}[ht]\label{tab:hyplines}
\begin{center}
\caption{The classical generalised quadrangles.}\label{tbl:classical}
\begin{small}
	\setlength{\tabcolsep}{12pt}
\begin{tabular}{l|l|l}
\toprule
$\mathcal{Q}$& Order & Span size\\
\midrule
$\W(3,q)$, $q$ odd& $(q,q)$ & $|\sp(x,y)| = q + 1$\\
$\Q(4,q)$, $q$ even& $(q,q)$ & $|\sp(x,y)| = q + 1$\\
$\Q(4,q)$, $q$ odd & $(q,q)$ & $|\sp(x,y)| = 2$\\
$\Q^-(5,q)$ & $(q,q^2)$ & $|\sp(x,y,z)| = q + 1$\\
$\herm(3,q^2)$ & $(q^2,q)$ & $|\sp(x,y)| = q^2 + 1$\\
$\herm(4,q^2)$ &$(q^2,q^3)$ & $|\sp(x,y)| = q+1$\\
$\herm(4,q^2)^D$ &$(q^3,q^2)$ & $|\sp(x,y)| = 2$\\
\bottomrule
\end{tabular}
\end{small}
\end{center}\vspace{-0.7cm}
\end{table}

\section{Secure GQ-UPIR systems}

We begin this section by describing in detail the pseudonymity relation on a GQ-UPIR scheme for a single eavesdropper.

\begin{proposition}\label{distance2}
In a GQ-UPIR scheme using Protocol \ref{prot1}, the pseudonymity classes with respect to a single eavesdropper $c$ are singleton classes for users at distance $1$ from $c$, and are of the form $\sp\langle c, u\rangle \setminus \{c\}$ for any user $u$ at distance $2$ from $c$.
\end{proposition}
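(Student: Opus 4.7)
The plan is to compute, for each candidate source $u$, each line $M$ through $c$, and each user $v$, the probability
\[
P_u(M, v) \;:=\; \mathbb{P}\bigl( \text{message } [(v), Q] \text{ appears on } M \,\big|\, u \text{ is the source of } Q\bigr),
\]
interpreting the observation ``$Q \in M$'' as the full content of a message on $M$ (following the usage in the proof of Theorem~\ref{distance1}). Since Protocol~\ref{prot1} writes on at most two lines per query and $c$ sees every message on a line through $c$, two users are pseudonymous with respect to $c$ iff $P_u = P_{u'}$ as functions of $(M, v)$.

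Suppose first that $u$ is collinear with $c$ on the unique line $L_u$. On the direct (length-$2$) path $u$ writes $[(v), Q]$ to $L_u$ for each random proxy $v \in L_u \setminus \{u\}$, giving $P_u(L_u, v) = 1/n$, and crucially $P_u(L_u, u) = 0$. Any $u' \neq u$ on $L_u$ produces $P_{u'}(L_u, u) = 1/n$ by choosing proxy $v = u$, distinguishing $u'$ from $u$. Any $u'$ off $L_u$ can reach $L_u$ only through a routed (length-$4$) path with intermediate $u_1 \in L_u \cap B_1(u', v)$; by the GQ axiom this intersection is a single point, so $P_{u'}(L_u, v) \leq 1/(n(t+1)) < 1/n = P_u(L_u, v)$ for $v \in L_u \setminus \{u, c\}$. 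Hence the pseudonymity class of $u$ is $\{u\}$.

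Now let $u$ be non-collinear with $c$. No line through $u$ passes through $c$, so the first hop $M_1 = u u_1$ is invisible to $c$, and the only relevant event is $M_2 = u_1 v$ meeting a line $M$ through $c$. By the GQ axiom, each such $M$ carries a unique point $x_M(u) \in M$ collinear with $u$ (necessarily distinct from $c$), and a case split on whether $v \in M$ and whether $v \in B_1(u)$ gives
\[
P_u(M, v) \;=\; \begin{cases} \dfrac{1}{n(t+1)} & v \in M \setminus \{x_M(u)\}, \\[2pt] 0 & \text{otherwise.} \end{cases}
\]
Thus $P_u$ is encoded entirely by $\{x_M(u) : M \ni c\} = B_1(u) \cap B_1(c) = B_1(u, c)$, so two distance-$2$ users are pseudonymous iff $B_1(u, c) = B_1(u', c)$. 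Applying $B_1$ once more and using the Galois-type identity $B_1 \circ B_1 \circ B_1 = B_1$ (the engine of Lemma~\ref{lem:span1}), this is equivalent to $\sp(u, c) = \sp(u', c)$, that is $u' \in \sp(c, u)$.

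Finally $c$ itself is excluded from the class because $c$ knows when it is the source, and a distance-$1$ user $w$ is never pseudonymous with $u$ since any $v \in L_w \setminus \{w, c\}$ satisfies $P_w(L_w, v) = 1/n > 1/(n(t+1)) \geq P_u(L_w, v)$. Combining the above, the pseudonymity class of $u$ is exactly $\sp(c, u) \setminus \{c\}$. The main obstacle is the careful bookkeeping of contributions to $P_u(M, v)$ under Protocol~\ref{prot1}, since a given observation can arise from several combinations of direct/routed paths and choices of intermediate user; once this accounting is in place, the geometric content reduces to a single application of the GQ axiom and the Galois-connection description of hyperbolic lines.
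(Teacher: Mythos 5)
Your proposal is correct and follows essentially the same route as the paper's proof: you show that a distance-$1$ source is resolved exactly as in Theorem~\ref{distance1} (the addressee $u$ itself never appears on $L_u$), and that for a distance-$2$ source the eavesdropper's observations are determined precisely by $B_1(c)\cap B_1(u)$ (the never-addressed point on each line through $c$), after which Lemma~\ref{lem:span1} converts equality of these traces into membership in $\sp(c,u)$. Your version merely makes the paper's argument quantitative by writing out the observation probabilities $P_u(M,v)$ under a uniform choice of shortest path, which the paper assumes implicitly.
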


\begin{proof}
Suppose that $c$ observes a series of linked queries; if $c$ shares a message space with $u$, then the queries always appear in this message space. Otherwise by the GQ axiom, for every line through $c$, there is a unique user on that line collinear with $u$. This implies that $c$
observes linked queries from $u$ distributed uniformly across all message spaces to which $c$ has access. Hence $c$ can decide whether $u$
is at distance $1$ or distance $2$.

If $d(c,u) = 1$, then an argument exactly analogous to that of Theorem \ref{distance1} shows that $c$ can resolve the identity of $u$.
So suppose that $d(c,u) = 2$. For a fixed line $M$ containing $c$, there is a unique user $u_{1}$ sharing a message space with $u$. Over sufficiently many linked queries, $c$ will observe queries addressed to every user in $M$ \textbf{except} for $x$. As a result, $c$
learns $\mathcal{X} = B_{1}(c) \cap B_{1}(u)$.

Now recalling Definition \ref{def:span}, suppose that $v \in \sp\langle u, c_{1}\rangle$. Then by Lemma \ref{lem:span1}, $B_{1}(c_{1}) \cap B_{1}(v) = \mathcal{X}$. It follows that $u$ and $v$ are pseudonymous. Likewise, any other user in $\sp \langle u, c_{1} \rangle \setminus c_{1}$ falls into the same pseudonymity class.
\end{proof}

An easy corollary of Theorem \ref{distance2} is that a single user in a GQ-UPIR scheme can resolve the identity of every other user in the scheme
if and only if every hyperbolic line in the GQ has size $2$. There are two such families known: $Q(4,q)$ where $q$ is an odd prime power, and $H(4, q^{2})^{D}$. The data given in Table \ref{tab:hyplines} shows that the pseudonymity relation on a GQ-UPIR scheme will never give a giant component when using Protocol \ref{prot1}. We introduce a new protocol, inspired by onion routing, which will be secure against large coalitions of users.

\begin{protocol}\label{prot2}
	Let $(\mathcal{U} \cup \mathcal{M}, E)$ be a GQ-UPIR system.
    Suppose furthermore that every user has been assigned a public key, and that user $u$ wishes to retrieve the response to the query $Q$ from the database.
    \begin{small}
	\begin{enumerate}
		\item $u$ chooses a user $v$ uniformly at random from the set of all users, and generates a private key $\psi$.
		\item If $u = v$, $u$ requests $Q$ directly from the server, receiving response $R$.
        \item If $d(u, v) = 1$, then user $u$ encrypts both the query $Q$ and the private key $\psi$ using $v$'s public key $\phi_{v}$, and writes
        the request $[v, \phi_{v}(Q), \phi_{v}(\psi)]$ to the unique message space that they share.
        \item Otherwise, $d(u,v) = 2$ and $u$ chooses a shortest path to $v$, say $[u, M_{1}, u_{1}, M_{2}, v]$. Message passing is as in Protocol~\ref{prot1}: $u$ writes the query $[(u_{1}, M_{2}, v), \phi_{v}(Q), \phi_{v}(\psi)]$ to $M_{1}$.
        \item When $v$ receives the request, he forwards $Q$ to the database, receives response $R$ and writes the response $[(v), \phi_{v}(Q), \psi(R)]$ to the message space in which the query was observed. The response is returned to user $u$ as in Protocol \ref{prot1}.
	\end{enumerate}
\end{small}
\end{protocol}

\begin{remark}
In Protocol \ref{prot2}, the only user who learns the query $Q$ is the proxy $v$; this means that users do not observe linked queries addressed to other users. The use of a private key is necessary since revealing $u$'s public key to $v$ would compromise $u$'s privacy.
\end{remark}

\begin{proposition}\label{pseudo}
In a GQ-UPIR scheme using Protocol \ref{prot2},
all users at distance two from a single eavesdropper $c$
are pseudonymous.
\end{proposition}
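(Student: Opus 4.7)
The key property of Protocol~\ref{prot2} is that both $Q$ and the fresh key $\psi$ are encrypted under the proxy's public key $\phi_v$, so an eavesdropper $c \neq v$ sees only ciphertext. Ciphertexts under distinct public keys are indistinguishable, so $c$ cannot match two observations by content and cannot assemble a series of linked queries in the sense used in Proposition~\ref{distance2}. My plan is to exploit this by carrying out the pseudonymity calculation one query at a time, showing that for every $u \in B_2(c)$ the probability that $c$ observes a query of $u$ in a given message space depends only on the GQ parameters $(s,t)$ and the user count $n$.

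Fix $u \in B_2(c)$ and a line $M$ through $c$; since $c \notin B_1(u)$ we have $u \notin M$. I split on the random proxy $v \in \mathcal{U}$. If $v = u$ the query is never written to any message space, and if $d(u,v) = 1$ it is written to the unique line through $u$ and $v$, which is a line through $u$ and hence not $M$. So suppose $d(u,v) = 2$. Protocol~\ref{prot2} then routes the query along a path $u - M_1 - u_1 - M_2 - v$ and deposits it in $M$ exactly when $M_2 = M$; the GQ axiom forces $u_1$ to be the unique point of $M \cap B_1(u)$, and $v$ must be any of the remaining $s$ points of $M$. Assuming $u_1$ is chosen uniformly among the $t+1$ elements of $B_1(u) \cap B_1(v)$ (a choice permitted by the remark after Protocol~\ref{prot1}), each admissible $v$ contributes $\tfrac{1}{n} \cdot \tfrac{1}{t+1}$, and summing over the $s$ admissible proxies yields
\[
\mathbb{P}(Q \in M \mid u \text{ is source}) \;=\; \frac{s}{n(t+1)}.
\]
This quantity is manifestly independent of the particular $u \in B_2(c)$, so Bayes' rule forces the ratio $\mathbb{P}(u \text{ is source} \mid Q \in M)/\mathbb{P}(u \text{ is source})$ to coincide across $B_2(c)$, which is the pseudonymity condition.

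The main subtlety I would take care to address is the case $v = c$: here $c$ is itself the proxy and can decrypt, so it can in principle link separate plaintexts originating from the same $u$. However, Lemma~\ref{triangle-free} implies that distinct elements of $B_1(u) \cap B_1(c)$ lie on distinct lines through $c$, giving a bijection between $B_1(u) \cap B_1(c)$ and the $t+1$ lines through $c$. Consequently, conditional on $v = c$ the terminal message space is uniformly distributed over the lines through $c$ regardless of which $u \in B_2(c)$ was the source, so even the additional linking power that $c$ acquires while acting as proxy cannot discriminate among sources in $B_2(c)$. The per-query computation above is then the entire story, and all users at distance two from $c$ lie in a single pseudonymity class.
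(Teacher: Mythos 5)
Your proposal is correct and rests on the same two pillars as the paper's own argument: the encryption of $Q$ under $\phi_v$ means $c$ cannot link queries for which he is not the proxy (so the addressee-based attack of Proposition~\ref{distance2} is unavailable), and the GQ axiom gives a bijection between $B_1(u)\cap B_1(c)$ and the lines through $c$, so the message spaces in which $c$ does observe (linkable) traffic are distributed identically for every source in $B_2(c)$. Your explicit per-query likelihood $\tfrac{s}{n(t+1)}$ and the separate treatment of the case $v=c$ simply make quantitative what the paper states qualitatively, so this is essentially the same proof in more detail.
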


\begin{proof}
As in Proposition \ref{distance2}, $c$ can identify whether the source $u$ of a series of linked queries is at distance $1$ or distance $2$.
Suppose that $u$ is at distance $2$. Then $c$ observes linked queries in every message space to which he has access with equal probability.
Without observing queries addressed to other users he gains no information about members of $B_{1}(u)$, and so any pair of users at distance
$2$ are pseudonymous.
\end{proof}

In contrast to GQ-UPIR schemes, it can be shown that encryption of messages offers limited benefits in
PBD-UPIR schemes. In particular, a single eavesdropper learns which message space he shares with another user: this means that there can be no giant component in the pseudonymity relation.
It can be shown that a coalition of three users, not all sharing a single message space, suffices to identify any other user in a projective plane UPIR scheme. In essence, intersection attacks do not require coalition members to observe queries addressed to other users.

\begin{theorem}
A GQ-UPIR scheme using Protocol $2$ is secure against coalitions of users of size $O(n^{1/4-\epsilon})$.
Hence any family of GQ-UPIR schemes is secure in the sense of Definition \ref{defn:priv}.
\end{theorem}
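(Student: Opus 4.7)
The plan is to lift Proposition~\ref{pseudo} from a single eavesdropper to a coalition $C \subseteq \mathcal{U}$ of size $k$, and then count the users on which $C$ has nontrivial information. I would first establish the coalition version of Proposition~\ref{pseudo}: if $u$ and $u'$ both lie at distance $2$ from every member of $C$, then $u$ and $u'$ are pseudonymous with respect to $C$. Each $c \in C$ individually sees linked queries from such a source distributed uniformly across the lines through $c$, and, because queries are encrypted under the proxy's public key in Protocol~\ref{prot2}, $c$ recovers the plaintext only when $c$ happens to be the proxy. Pooling observations across $C$ yields only the fact that the source is at distance $2$ from each $c \in C$, a property shared by $u$ and $u'$.

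The second step is to bound the complementary set of ``exposed'' users, i.e.\ those at distance at most $1$ from some member of $C$. By Lemma~\ref{point-enum}, $|B_1(c)| = s(t+1)$ for any user $c$, where $(s,t)$ is the order of the underlying GQ. A union bound then gives that the number of exposed users is at most
\begin{equation*}
k\bigl(1 + s(t+1)\bigr) = O(k\,s\,t).
\end{equation*}
Since $n = (s+1)(st+1) = \Theta(s^2 t)$, the exposed fraction is $O(k/s)$.

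The final step is to invoke Theorem~\ref{lem:GQst}: Higman's inequality $t \le s^2$ gives $n = O(s^4)$, so $s = \Omega(n^{1/4})$. Thus for any coalition of size $k = O(n^{1/4 - \epsilon})$ the exposed set has size $O(n^{1-\epsilon})$, and the unique pseudonymity class on the complement is the giant component required by Definition~\ref{defn:priv}; the second sentence of the theorem follows by fixing $k$ and letting $n \to \infty$, since then $k \le n^{1/4-\epsilon}$ for all sufficiently large $n$. I expect the main obstacle to be making the coalition extension of Proposition~\ref{pseudo} airtight: one has to rule out any side channel (e.g.\ linking two queries through distinct coalition members via routing metadata) that would let $C$ refine the pseudonymity class beyond what each $c$ sees individually, and the fresh per-query private key $\psi$ in Protocol~\ref{prot2} is what precludes such linking.
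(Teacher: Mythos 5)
Your proposal is correct and follows essentially the same route as the paper: the giant pseudonymity class is the set of users at distance $2$ from every coalition member (the coalition version of Proposition \ref{pseudo}, which the paper likewise invokes without further elaboration), the exposed set is bounded by $|C|\cdot s(t+1)$ users via Lemma \ref{point-enum}, and Higman's inequality yields the $O(n^{3/4})$ per-member bound. Your direct use of $t \le s^{2}$ to get $s = \Omega(n^{1/4})$ simply replaces the paper's parametrization $t = s^{\alpha}$ with $1/2 \le \alpha \le 2$; the resulting computation is the same.
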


\begin{proof}
Let $c_{1}$ and $c_{2}$ be members of a coalition, and suppose that $d(c_{1}, u) = d(c_{2}, u) = 1$.
Write $M_{i}$ for the (unique) message space shared by $c_{i}$ and $u$. An intersection attack
identifies $u$ as the unique user with access to both spaces. We conclude that the identity of
any user at distance $1$ from more than $1$ coalition member can be resolved.

Suppose now that $c_{1}$ is the unique coalition member at distance $1$ from $u$. Then $c_{1}$ learns
the message space $M$ shared with $u$. By the GQ axiom, every other user is at distance $1$
from a unique user in $M\setminus \{u\}$. So the coalition gains information about the identity of $u$
proportional to the number of neighbours that they have in $M$. In particular, the identity of $u$ is
completely resolved if and only if every user in $M\setminus \{u, c_{1}\}$ is a neighbour of some coalition member.
In this last case, the size of the coalition is at least proportional to the number $s+1$ of points on a line in the GQ.

By Proposition \ref{pseudo}, the users at distance $2$ from every member of the coalition $C$ form a
single pseudonymity class with respect to $C$. We estimate the size of this pseudonymity class. By Lemma
\ref{point-enum}, the number of users at distance $1$ from a single user is $s(t+1)$. To apply Definition
\ref{defn:priv}, we require $|C| (st+s) \leq n^{1-\epsilon}$ where $n = s^{2}t + st + s + 1$ is the number of
users in the scheme.

Using Lemma \ref{lem:GQst}, we write $t = s^{\alpha}$ where $\frac{1}{2} \leq \alpha \leq 2$. Straightforward
calculations show that $n = O(s^{2+\alpha})$ and that $st + s = O(s^{1+\alpha})$ from which we conclude that
%\[
$ st + s = O(n^{\frac{1+\alpha}{2+\alpha}}) = O(n^{3/4}),$
%\]
where the second equality comes from $\alpha \leq 2$. It follows that when $|C| = O(n^{1/4 - \epsilon})$,
the set of users at distance $2$ from $C$ form a giant component in the pseudonymity relation of $C$. 
\end{proof}

\section*{Acknowledgements}

The authors acknowledge the assistance of John Bamberg with questions concerning generalised quadrangles.
This research was partially supported by the Academy of Finland (grants \#276031, \#282938, \#303819, \#283262, and \#283437).

\bibliographystyle{abbrv}
\flushleft{
\bibliography{NewBiblio}
}
		
\end{document}